\newtheorem{thm}{Theorem}
\newtheorem{corollary}[thm]{Corollary}
\theoremstyle{definition}
\newcommand{\ignore}[1]{}
\newcommand{\cE}{\mathcal{E}}
\newcommand{\PP}{\mathbb{P}}
\newcommand{\EE}{\mathbb{E}}
\newcommand{\R}{\mathcal{R}}
\newcommand{\Q}{\mathcal{Q}}
\newcommand{\D}{\mathcal{D}}
\begin{document}

\author{Wim Hordijk and Mike Steel} 

\title{Autocatalytic Sets in Polymer Networks with Variable Catalysis Distributions}

\address{WH: Konrad Lorenz Institute for Evolution and Cognition Research,
Klosterneuburg, Austria. MS: Biomathematics Research Centre, University of Canterbury, Christchurch,
New Zealand}
\maketitle

\begin{abstract}
All living systems -- from the origin of life to modern cells -- rely on a set of biochemical reactions that are simultaneously self-sustaining and autocatalytic. This notion of an autocatalytic set has been formalized graph-theoretically (as `RAF'), leading to mathematical results and polynomial-time algorithms that have been applied to simulated and real chemical reaction systems. In this paper, we investigate the emergence of autocatalytic sets in polymer models when the catalysis rate of each molecule type is drawn from some probability distribution. We show that although the average catalysis rate $f$ for RAFs to arise depends on this distribution, a universal linear upper and lower bound for $f$ (with increasing system size) still applies. However, the probability of the appearance (and size) of autocatalytic sets can vary widely, depending on the particular catalysis distribution. We use simulations to explore  how tight the mathematical bounds are, and the reasons for the observed variations. We also investigate the impact of inhibition (where molecules can also inhibit reactions) on the emergence of autocatalytic sets, deriving new mathematical and algorithmic results. \end{abstract}

\section{Autocatalytic sets and RAF theory}

The notion of self-sustaining autocatalytic sets was introduced and studied by Kauffman \cite{Kauffman:71,Kauffman:86,Kauffman:93} in the context of the  origin of life \cite{Hordijk:10}. The concept was later formalized mathematically and further developed as RAF theory \cite{Steel:00,Hordijk:04,Hordijk:13a}. In this section, we briefly review the basics of RAF  theory and its main results.

First, we define a {\it chemical reaction system} (CRS) as a tuple $\Q=(X,\R,C)$ consisting of a set $X$ of molecule types, a set $\R$ of chemical reactions, and a catalysis set $C$ indicating which molecule types catalyze which reactions. Formally, each reaction $r \in \R$ can be viewed as a pair of subsets of $X$ (the `reactants' and `products' of the reaction $r$) and $C$ is a subset of $X \times \R$.
We also assume throughout that $X$ contains a distinguished subset, a  {\em food set}  $F \subseteq X$, which is a subset of molecule types that are assumed to be directly available from the environment (i.e., they do not necessarily have to be produced by any of the reactions in the system).

A CRS can be graphically represented by a {\it bipartite directed hypergraph}. In such a graph representation there are two types of nodes: (1) molecule types (black dots) and (2) chemical reactions (white boxes). There are also two types of (directed) edges: (1) reaction edges (solid black) and (2) catalysis edges (dashed grey). The end points (incident nodes) of these edges are always from different node sets (i.e., an edge is always directed from a molecule node to a reaction node, or vice versa). A simple example is shown in Fig. \ref{fig:RAFexample}.

The notion of catalysis plays a central role here. A {\it catalyst} is a molecule type that significantly speeds up the rate at which a chemical reaction happens but without being ``used up'' in that reaction. Catalysis is ubiquitous in life \cite{Santen:06}. Almost all organic reactions are catalyzed, and catalysts are essential in determining and regulating the functionality of the chemical networks that support life.

Given a CRS $\Q=(X,\R, C)$ and food set $F \subseteq X$, an autocatalytic set, or {\it RAF set}, is now defined as a nonempty subset $\R' \subseteq \R$ of reactions which is:
\begin{enumerate}
  \item {\it Reflexively Autocatalytic} (RA): each reaction $r \in \R'$ is catalyzed by at least one molecule type that is either a product of another reaction in $\R'$ or is an element $F$.
  \item {\it Food-generated} (F): each molecule type that is a reactant of any reaction $r \in \R'$ can be generated from the food set $F$ by using a series of reactions only from $\R'$ itself.
\end{enumerate}

A simple example of a RAF set consisting of just two reactions is presented in Fig. \ref{fig:RAFexample}, though a RAF set can of course be of any size. A  more formal mathematical definition of RAF sets was provided in \cite{Hordijk:04,Hordijk:11}, including an efficient (polynomial-time) algorithm for finding such sets in arbitrary CRSs (or determining that there is none). This RAF algorithm is based on a simple pruning rule that iteratively removes reaction nodes from the input CRS that (by definition) cannot be part of a RAF set. If the algorithm returns a non-empty subgraph, this subgraph then represents the {\it maximal} RAF set (i.e., the union of all RAF sets) contained within the input CRS. Otherwise, there is no RAF set in the given CRS.

\begin{figure}[htb]
\centering
\includegraphics[scale=0.9]{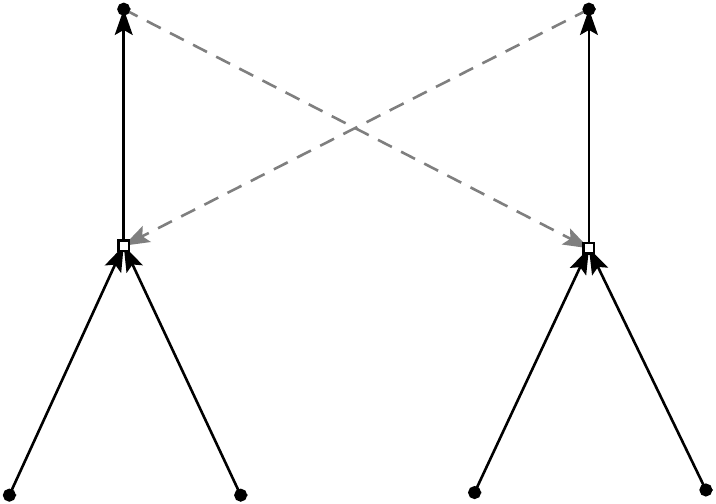}
\caption{An example of a simple CRS with just two reactions, represented as a bipartite directed hypergraph. Black dots represent molecule types and white boxes represent chemical reactions. Solid (black) arrows indicate reactants going into and products coming out of a reaction. Dashed (grey) arrows indicate catalysis. With a food set consisting of the four molecule types at the bottom, this CRS forms an autocatalytic (RAF) set where the two products (at the top) mutually catalyze each other's formation.}
\label{fig:RAFexample}
\end{figure}

Note that the example RAF set in Fig. \ref{fig:RAFexample} requires at least one ``spontaneous'' (i.e., uncatalyzed) reaction event before the system forms a RAF (i.e. at least one of the two reactions needs to happen spontaneously before the first catalyst will be present when starting with only molecules from the food set). This is, of course, always possible, but at a much lower rate than when these same reactions are catalyzed. As a consequence, there may be a (stochastic) ``waiting time'' before this RAF set is realized in a dynamical sense. However, once a RAF is realized, it can, in principle, grow in concentration at an exponential rate, due to its (collectively) autocatalytic nature.

As argued elsewhere, this requirement for (rare) spontaneous reactions is actually a useful property for the potential evolvability of autocatalytic sets \cite{Hordijk:14b}. Furthermore, it is the main reason why catalysts are treated separately in RAF theory, instead of including them in a reaction as both a reactant and a product (which is often done in alternative chemical reaction network representations). Such an alternative representation requires the catalyst to always be present (as a reactant), preventing reactions from occasionally happening spontaneously.

RAF theory has been applied extensively to simple polymer-based models of chemical reaction networks (see next section), showing that autocatalytic sets are highly likely to exist in such models, and also for chemically realistic levels of catalysis. More specifically, it was shown both experimentally and theoretically that only a {\it linear} growth rate in the level of catalysis with increasing maximum polymer length is required to get RAF sets with high probability \cite{Hordijk:04,Mossel:05}, where the level of catalysis is expressed in terms of the average number of reactions catalyzed per molecule type. This is in stark contrast to the required {\it exponential} growth rate in the original argument of Kauffman \cite{Kauffman:86,Kauffman:93}. Also, as opposed to the original assumption that autocatalytic sets appear as ``giant connected components'' making up almost the entire reaction network \cite{Kauffman:86,Kauffman:93}, RAF sets in the simple polymer model tend to be much smaller, consisting of only a fraction of the full reaction network \cite{Hordijk:04}. In fact, the collection of RAF sets under the partial order of set inclusion is generally very large, with a unique maximal element (the maximal RAF), all the way down to the minimal RAF subsets (called {\em irreducible} RAFs) \cite{Hordijk:12a}.

Furthermore, these main results have been shown to hold under a wide variety of chemically more realistic model assumptions, such as template-based catalysis (as with base-pair formation in RNA) \cite{Hordijk:11}, allowing only the longest polymers to be catalysts \cite{Hordijk:14}, allowing partitioned chemical reaction networks (as in nucleic acids vs. proteins) \cite{Smith:14}, or using power-law distributed catalysis \cite{Hordijk:14a}.

However, autocatalytic sets are not just a theoretical construct, as they have also been created and studied in real chemical networks under controlled laboratory conditions \cite{Sievers:94,Ashkenasy:04,Lincoln:09,Vaidya:12}. In fact, the formal RAF framework was used to analyze one of these real autocatalytic networks in detail  \cite{Hordijk:13}, one consisting of 16 catalytic RNA molecules, or ribozymes \cite{Vaidya:12}. Moreover, it was shown, using the RAF algorithm, that the metabolic network of {\it E. coli} forms a large autocatalytic set of close to 1800 reactions \cite{Sousa:15}. As far as we know, this is the first formal proof that living organisms (or at least essential parts thereof) can be realised as a RAF.

Finally, we have shown that ``higher levels'' of autocatalytic sets can emerge \cite{Hordijk:12a,Hordijk:15a}. For example, a boundary (such as a lipid layer) can be considered as an additional catalyst: it increases the rate at which reactions happen inside it by keeping the relevant molecules in close proximity rather than having them diffuse away, though the boundary itself is not used up in those reactions. This way, an ``autocatalytic set of autocatalytic sets'' emerges, which can form a simple (proto)cell-like structure \cite{Hordijk:15a}. This line of reasoning can, of course, be extended to the next emergent level of autocatalytic sets forming multicellular organisms, and so on, all the way up to the species level.

\section{Polymer models} \label{sec:BinPolModel}
Polymers, such as RNA sequences and the amino acid sequences that comprise enzymes and proteins, play a central role in biochemistry and are likely to have also played an important role in early life. In previous work, we have considered a simple polymer-based chemical reaction system known as the {\it binary polymer model}, originally introduced by Kauffman \cite{Kauffman:86,Kauffman:93}. In this model, polymers are represented by bit strings that can be concatenated into longer ones or split up into shorter ones. Catalysis is assigned at random. Instances of this model then form a full CRS as follows:
\begin{itemize}
  \item The molecule set $X$ consists of all bit strings up to (and including) a maximum length $n$. In mathematical notation: $X = \{0,1\}^{\leq n}$.
  \item The food set $F$ consists of all bit strings up to (and including) length $t$, where $t \ll n$. In our experiments, we use $t=2$, i.e., $F = \{0,1,00,01,10,11\}$.
  \item The reaction set $\R$ consists of two types of reactions:
  \begin{enumerate}
    \item {\it Ligation}: gluing two bit strings together into a longer one, e.g., $00 + 111 \rightarrow 00111$.
    \item {\it Cleavage}: splitting a bit string into two smaller ones, e.g., $010101 \rightarrow 01 + 0101$.
  \end{enumerate}
All possible ligation and cleavage reactions between bit strings are included in $\R$, as long as none of their reactants or products violates the maximum molecule length $n$. Note that each ligation reaction has a corresponding cleavage reaction. In fact, in most of our experiments, these corresponding ``forward'' and ``backward'' reactions are counted as one bi-directional reaction.
  \item The catalysis set $C$, which consists of pairs $(x,r)$ of a molecule types and reaction,  is  assigned according to some random process.  In the original (uniform) model, each molecule type $x$ has the same probability of catalyzing each reaction $r$. However, biochemical networks (e.g. the {\em E. coli} study from \cite{Sousa:15}) suggest that catalysis is far from uniform and that the distribution of catalysis events across molecule types has a longer tail that is closer to a power law than a binomial distribution. Thus a focus of this paper is to explore the different distributions of catalysis. We will present some new mathematical results that hold for a general distribution of catalysis rates, but we also focus on four particular ways of assigning catalysis:
  \begin{enumerate}
    \item {\it Uniform}: For each  molecule type $x \in X$ and reaction $r \in \R$, the pair $(x,r)$ is independently included in $C$ with probability $p$. This is the method used in the original model \cite{Kauffman:86,Kauffman:93}.
    \item {\it Power law}: For each molecule type $x \in X$,  first a random number $k_x\geq 1$ is drawn from a Zipf distribution with parameter $a$:
  \[ \PP(k_x = k) = \frac{k^{-a}}{\zeta(a)}, \]
where $\zeta$ is the Riemann Zeta function and $k_x$ ranges over the positive integers.  Next, molecule type $x$ is assigned as a catalyst to $k'_x-1$ distinct but randomly chosen reactions from the full set $\R$ (which are drawn independently for each molecule type), where $k'_x=\min \{k_x, |\R|\}$.  Note that we subtract one from the value of $k'_x$, so there are many molecule types that do not catalyze any reaction. This method was first investigated in \cite{Hordijk:14a}.
  \item {\it All-or-nothing}: Each molecule type $x \in X$ independently either catalyzes {\it all} reactions in $\R$ with probability $p$, or $x$ catalyzes no reaction with probability $1-p$. This is an ``extreme'' model that we use for comparison below.
  \item{\it Sparse}: Each molecule type $x \in X$ independently  either (i) catalyzes each reaction $r \in \R$ independently with probability $\frac{1}{n}$ (this occurs with probability $p$), or 
(ii) $x$ does not catalyze any  reaction (this occurs with probability $1-p$). This is a less extreme version of the all-or-nothing model.
  \end{enumerate}
In each assignment method, all reaction are considered to be bi-directional. In other words, if a molecule type is assigned as a catalyst to a reaction, it catalyzes both the forward and backward directions.
\end{itemize}
For a given value of the maximum molecule length $n$, the set of molecule types $X$ and the set of reactions $\R$ are fixed. In other words, the nodes (both the molecule nodes and the reaction nodes) and the reaction edges (represented graphically by solid black lines) are fixed in the reaction graph. However, each instance of the model (even for the same values of $n$ and $p$) will have a different catalysis assignment $C$ (i.e. a  different configuration of catalysis edges, represented graphically by dashed grey lines). Fig. \ref{fig:RAF_1} shows an example of a (maximal) RAF set that was found by our RAF algorithm in an instance of the binary polymer model with parameters $n=5$, $t=2$, and $p=0.0045$ (using the uniform catalysis assignment method).

\begin{figure}[htb]
\centering
\includegraphics[scale=0.7]{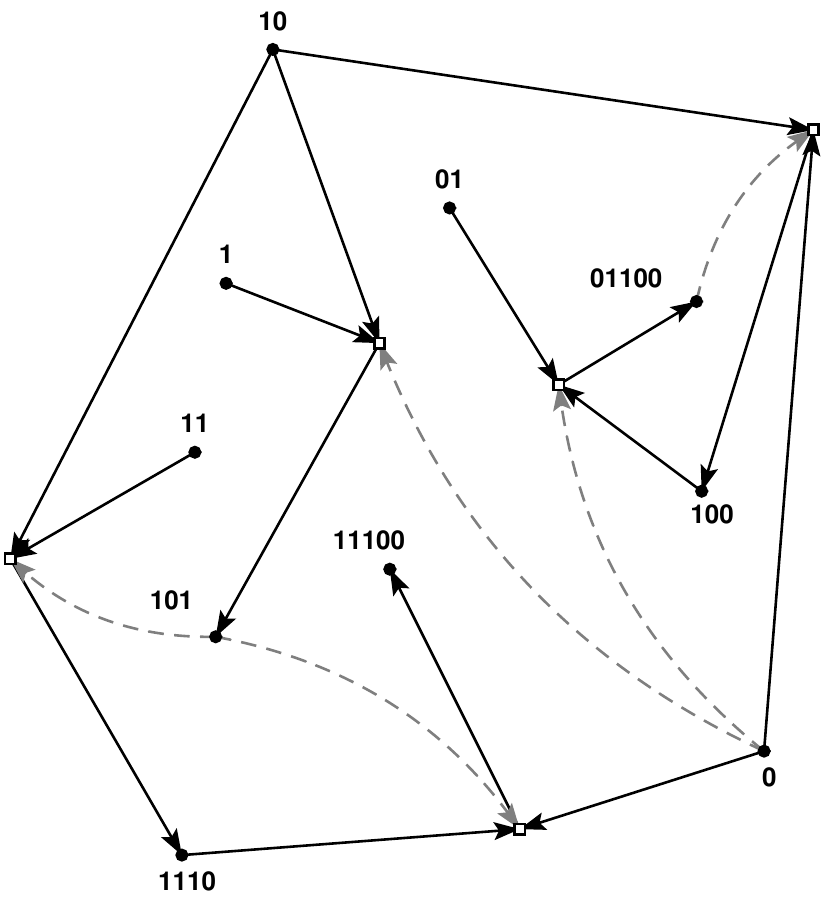}
\caption{An example of a (maximal) RAF set as found by our algorithm in an instance of the binary polymer model with $n=5$, $t=2$, and $p=0.0045$ (uniform catalysis). Note that the CRS that this RAF lies within has $|X|=62$ molecule types and $|\R|=196$ reactions.}
\label{fig:RAF_1}
\end{figure}

As already mentioned above, the main results of RAF theory hold up under various alternative versions of the basic model that introduce more chemical realism. In fact, in some cases it is even possible to predict mathematically the required level of catalysis in these more elaborate versions from that in the standard model \cite{Hordijk:12,Hordijk:14}. So, the binary polymer model, even in its simplest version, already seems to hold much predictive and explanatory power. Moreover, the restriction to polymers that are binary is in no way serious, since, as we will see, most of the mathematical results translate directly to polymers over an arbitrary alphabet. For the simulations, however, it is convenient to stay with the binary polymer model.

In the next section, we derive new mathematical results that describe the probability of RAFs arising as the average catalysis rate varies in a class of polymer models that includes the four catalysis assignment methods described above. In Sections~\ref{levels1} and \ref{levels2}, we then use simulations of the binary polymer model to see how close these mathematical bounds come to the estimated probabilities of RAFs, and provide an explanation for the differences. 

\subsection{Inhibition ($u$-RAFs)} In this paper, we will also consider a version of the binary polymer model where molecules can {\it inhibit} reactions (i.e., prevent them from happening) in addition to catalyzing them. More specifically, similar to the catalysis set $C$, we define an inhibition set $I$ of molecule--reaction pairs $(x,r)$, where $(x,r) \in I$ means that molecule type $x \in X$ {\em inhibits} the reaction $r \in \R$ (this could be represented by a dashed red edge, for example, in the graph representation of the CRS). These inhibition pairs are also chosen at random for each instance of the model, using the uniform method described above for assigning catalysis, and independent of the catalysis assignments. 

 In this setting, a CRS that allows inhibition is a tuple $(X,\R,C,I)$ (with corresponding food set $F$). The definition of a RAF $\R'$ can then be extended to require, in addition, that no reaction in $\R'$ is inhibited by any molecule type that is produced by a reaction in $\R'$ or present in the food set $F$.
 
 We refer to such an ``uninhibited'' RAF as a $u$-RAF.  We present a mathematical results on
$u$-RAFs in Section~\ref{impact} and simulation results in  Section \ref{sec:Inhibition}.




\newpage
\section{Mathematical results on RAF (and $u$-RAF) sets in polymer models}
\label{mathsy}

In this section, we allow a little more generality in considering polymers over an arbitrary alphabet of $\kappa$ letters. To emphasize the dependence of the CRS on $n$, we will write $\Q_n=(X_n, \R_n, C)$, where $X_n$ is the set of all polymers up to length $n$ and $\R_n$ is the associated set of all cleavage-ligation reactions.  A fundamental relationship between the number of molecule types and reactions in $\Q_n$ (Eqn. (4) from \cite{Mossel:05}) is:
\begin{equation}
\label{asyeq}
1-O\left(\frac{1}{n}\right) \leq \frac{|\R_n|}{n|X_n|} \leq 1.
\end{equation}
In particular, $|\R_n| \sim n|X_n|$. 

We will also generalize catalysis beyond the four models described above to a setting where each molecule type $x$ catalyzes $Y_x$ reactions chosen uniformly at random from $\R_n$, where the random variable $Y_x$ is independently and identically distributed (i.i.d.) according to some distribution $\D$ (which may depend on $n$); we will call such models {\em distribution-based}. For example, in the uniform model, $Y_x$ has the binomial distribution ${\rm Bin}(|\R_n|, p)$, while in the sparse model, $Y_x$ has the mixture distribution described by letting 
$Y_x$ follow a binomial distribution ${\rm Bin}(|\R_n|, \frac{1}{n})$ with probability $p$, and setting $Y_x = 0$ with probability $1-p$.

For the uniform model, any collection of events of the form $$(x_1,r_1) \in C, (x_2, r_2) \in C,  \ldots, (x_m, r_m) \in C$$ comprise stochastically independent events.  However, for other distributions (including the other three mentioned above) these events can fail to be independent if the molecule types $x_i$ are not distinct.  
For example,  the two events $(x,r) \in C$ and $(x, r') \in C$ will be positively correlated in general (since although the events are conditionally independent given $Y_x$ these two conditional probabilities
both increase with $Y_x$). This subtlety means that certain mathematical arguments  applicable to the uniform model need to be carefully adjusted to extend to distribution-based models (the models we study here do not fall under the umbrella of the `extended binary polymer model' of \cite{Hordijk:13a}, which required the stronger independence condition to hold).

We will let $f$ denote the (average) {\em catalysis rate} in a CRS:
$$f = \sum_{r \in \R} \PP((x,r) \in C),$$
for any $x \in X$ (by the symmetry in the model, different choices of $x$ lead to the same value). Notice also that we can write $f = |\R_n| \PP((x,r) \in C)$ for any choice of $x \in X$ and $r \in R$. For example, in the uniform model $f = |\R_n| p$ while in the sparse model, $f=p|\R_n|/n \sim p|X_n|$.

A main goal in RAF theory has been to estimate the probability $P_n$ that an instance of the  uniform random polymer model with parameter $n$ contains a RAF. Bounds on $P_n$ (for the uniform model) were initiated in \cite{Kauffman:86}, refined in \cite{Steel:00}, and extended further by simulations in \cite{Hordijk:04}, which led to a more detailed mathematical analysis in  \cite{Mossel:05}. 

A main finding from this last paper was that RAFs arise in $\Q_n$ precisely when the average number of reactions each molecule type catalyzes ($f$) grows (at least) linearly  with $n$. This follows directly from the following pair of inequalities, both of which hold for all values of $n$. Recall that $F$ (the food source) consists of all polymers up to a fixed length $t$. For any distribution-based polymer model, we let $\tilde{P_n}$ be the probability that an instance of the model has a RAF in which every molecule type in $X$ is generated from $F$ using just the products of ligation reactions. By definition, $P_n \geq \tilde{P_n}$. For the uniform model we have:

\begin{itemize}
\item[(i)] If  $f \geq \lambda  n$, then 
\begin{equation}
\label{basiceq}
P_n \geq \tilde{P_n} \geq 1- \frac{\kappa(\kappa e^{-\lambda})^t}{1-\kappa e^{-\lambda}},
\end{equation} 
which converges quickly to 1 as $\lambda$ grows, and 
\item[(ii)]
if  $f \leq \lambda n$, then 
\begin{equation}
\label{basiceq2}
P_n = O(\lambda) \rightarrow 0 \mbox{ as } \lambda \rightarrow 0.
\end{equation}
\end{itemize} 

This linear rate of growth in $f$ required for RAFs to arise in the uniform polymer model contrasts sharply with the provably exponential rate required for a RAF in which every catalyst must already be available the first time the reaction proceeds (i.e., when no spontaneous reactions are allowed) \cite{Mossel:05}.

What can we say about the level of catalysis required for the emergence of RAFs in other (non-uniform) distribution-based polymer models?

For the all-or-nothing model, there is a very precise expression for $P_n$. If we write $f= \lambda n$ then:
\begin{equation}
\label{pneq}
P_n = 1- \exp(-\lambda) + o(1),
\end{equation}
where the $o(1)$ term decays to zero at an exponential rate with increasing $n$. The proof of Eqn.~(\ref{pneq}) is essentially trivial, since in the all-or-nothing model we have $f = p|\R_n|$ and so if we set $f=\lambda n$, then $p = \lambda n / |\R_n|$ and a RAF exists whenever at least one molecule type catalyzes one (and therefore every) reaction. This occurs with probability
$$1-(1-p)^{|X_n|} = 1 - \left(1- \frac{\lambda n}{ |\R_n|}\right)^{|X_n|} \sim 1-\exp(-\lambda),$$
by the relation $n|X_n| \sim |\R_n|$ from Eqn. (\ref{asyeq}). 

What is curious here is that in the all-or-nothing model, the same (linear) dependence of $P_n$ on $n$ occurs, even though the proof from \cite{Mossel:05} of the inequality (\ref{basiceq}) for the uniform model breaks down in one crucial step for the all-or-nothing model. However, the resulting RAFs in the all-or-nothing model  are very different from those in the uniform model. For a start, in the all-or-nothing model, RAFs arise at the same catalysis level required for all of $\R_n$ to be a RAF. By contrast, in the uniform model, the catalysis rate required for $\R_n$ to be a RAF grows {\em quadratically} with $n$. More precisely, in the uniform model, if we let $f= \lambda n^2$ then:
\begin{equation}
\label{allR}
\lim_{n \rightarrow \infty} \PP(\R_n  \mbox{  is a RAF}) =
 \begin{cases}
1, &  \mbox{ if } \lambda > \ln \kappa;\\
0, & \mbox{ if } \lambda < \ln \kappa.
\end{cases}
\end{equation}
A proof of this sharp transition result, which improves on a result from \cite{Steel:00}, is given in the Appendix.

A further difference between the all-or-nothing and the uniform model is that, for the latter model and at the linear catalysis rate at which RAFs appear, `small' RAF are highly unlikely to 
arise (more precisely, with probability converging to 1 as $n$ grows, any RAF must be larger than a certain function that grows exponentially with $n$)  \cite{Steel:13}. 
By contrast, in the all-or-nothing model, as soon as there is a RAF there is one of order $n$:  namely, any catalyst and any sequence of $O(n)$ reactions that  generates it from $F$.

The analysis of the sparse model is more interesting, and will also see shortly how the sparse model will also have small RAFs (of size $O(n^2)$) when RAFs appear, again in contrast to the uniform model.

We now state our first main mathematical result. 

\begin{thm}
\label{mainthm}
For any distribution-based polymer model, a necessary and sufficient condition for RAFs to arise is that the average catalysis rate $f$ grows (at least) linearly with $n$. More precisely:
\begin{itemize}
\item[(i)] If $f \geq \lambda n$ (and  $\lambda > 2 \ln \kappa$),  then 
\begin{equation}
\label{beq2}
P_n \geq \tilde{P_n} \geq \left(1- \frac{\kappa(\kappa e^{-\lambda/2})^t}{1-\kappa e^{-\lambda/2}}\right)(1-e^{-\lambda/2}),
\end{equation} 
which converges to 1 at exponential rate as $\lambda$ grows.
\item[(ii)]  If $f \leq \lambda n$ then for a constant $M$ (independent of $n$), we have:
\begin{equation}
\label{beq}
P_n \leq M \lambda \rightarrow 0 \mbox{ as }  \lambda \rightarrow 0.
\end{equation}
\end{itemize}
\end{thm}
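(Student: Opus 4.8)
The plan is to prove both directions by reducing to a transparent combinatorial event and then isolating the one place where the within-molecule correlation of catalysis causes trouble. For part (i), I would first reduce $\tilde{P_n}$ to a \emph{covering} probability. Call a molecule type $x$ with $|x|>t$ \emph{covered} if at least one of the $(|x|-1)$ ligation reactions producing $x$ is catalysed by some molecule type. If every such $x$ is covered, build $\R'$ by selecting, for each $x$, one catalysed ligation producing it; then every reactant of a reaction in $\R'$ is strictly shorter, so food-generation holds by induction on length (bottoming out in $F$), and the RA condition is automatic because the needed catalysts lie in $F$ together with the products of $\R'$, which between them exhaust $X_n$. Hence
\[
\tilde{P_n} \ge \PP(\text{every } x \text{ with } |x|>t \text{ is covered}),
\]
and it suffices to lower bound the covering probability.

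The key observation -- the one that rescues the Mossel--Steel argument from the loss of independence noted in the excerpt -- is that although the events $(x,r)\in C$ for a \emph{fixed} $x$ are correlated, the catalysis assignments of \emph{distinct} molecule types are independent. Writing $\ell=|x|$ and conditioning on the $Y_y$,
\[
\PP(x \text{ uncovered}) = \prod_{y\in X_n}\EE\!\left[\frac{\binom{|\R_n|-(\ell-1)}{Y_y}}{\binom{|\R_n|}{Y_y}}\right] \le g(\ell)^{|X_n|}, \qquad g(\ell):=\EE\big[(1-Y/|\R_n|)^{\ell-1}\big],
\]
using the elementary inequality $\binom{N-m}{k}/\binom{N}{k}\le (1-k/N)^m$. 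For the uniform model $Y\sim{\rm Bin}(|\R_n|,p)$ this reproduces the bound (\ref{basiceq}) exactly, after a union bound over the $\kappa^\ell$ molecules of each length and summation of the resulting geometric series.

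The main obstacle is that this union bound genuinely fails for heavy-tailed $\D$: in the all-or-nothing model $g(\ell)\approx 1-p$ is essentially independent of $\ell$, so $\sum_\ell \kappa^\ell g(\ell)^{|X_n|}$ diverges, reflecting that there the covering events for different $x$ are strongly \emph{positively} correlated (all molecules become covered together once a universal catalyst appears). I would resolve this by truncating $Y$ at a threshold and splitting the budget $f=\EE[Y]$ into a \emph{light} part and a \emph{heavy/near-universal} part; since these are carried by disjoint, hence independent, sets of molecule types, they contribute the two independent factors of the stated bound. The light part has controlled behaviour and gives a uniform-style geometric estimate at the \emph{halved} rate $\lambda/2$ (the factor two being the price of allocating at most half of $f$ to it), yielding the first factor $1-\kappa(\kappa e^{-\lambda/2})^t/(1-\kappa e^{-\lambda/2})$; this is exactly where the hypothesis $\lambda>2\ln\kappa$ enters, as the condition $\kappa e^{-\lambda/2}<1$ needed for convergence of $\sum_\ell(\kappa e^{-\lambda/2})^\ell$. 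When instead the heavy part dominates, with probability at least $1-e^{-\lambda/2}$ some molecule type catalyses a large enough fraction of $\R_n$ to cover whatever the light part misses, supplying the second factor; combining the two by independence gives the product in (\ref{beq2}).

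For part (ii) I would argue by a first-moment (union) bound. Since every polymer is producible from $F$ by ligation alone, food-generation is never the binding constraint; a RAF exists only if the catalysis ``closes up'', i.e. the catalyst-dependency digraph on reactions (an arc $r\to r'$ whenever a product of $r$ catalyses $r'$) contains a cycle whose reactions are all food-generated. I would bound $P_n$ by the expected number of minimal such autocatalytic cores, organised by size $s$: a core using $s$ distinct catalyst molecules has probability of order $q^s$ with $q=f/|\R_n|$ and $q|X_n|$ of order $\lambda$, and the combinatorial count is balanced so that the leading ($s$ small) term scales like $q|X_n|\sim\lambda$ while larger cores contribute a series converging for small $\lambda$, giving $P_n\le M\lambda$ with $M$ independent of $n$. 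The obstacle here is the mirror image of part (i): a core in which one molecule type catalyses several reactions has probability inflated from $q^2$ to $\EE[Y(Y-1)]/(|\R_n|(|\R_n|-1))$, which can be large for heavy-tailed $\D$, so I would again apply the truncation and show that the dominant cores use distinct catalysts and reduce to the uniform computation.
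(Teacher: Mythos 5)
Your reduction of part (i) to a covering event, and your diagnosis that the naive union bound over molecules fails for heavy-tailed $\D$, are both correct; but the proposed light/heavy repair does not close the gap. The two cases you describe only handle the extremes. Consider a distribution-based model in which each molecule type, with probability $\pi$, catalyzes a uniformly random set of $\alpha|\R_n|$ reactions (and otherwise none), with, say, $\kappa=2$ and $\alpha=1/4$, and $\pi$ tuned so that $f=\lambda n$. Here no single catalyst is ``near-universal'': one catalyst leaves a given molecule of length $\ell$ uncovered with probability about $(1-\alpha)^{\ell-1}$, and the union bound over the $\kappa^{\ell}$ molecules of that length requires $\kappa(1-\alpha)<1$, i.e.\ $\alpha>1-1/\kappa$, which fails; yet the ``light'' linearization $1-(1-Y/|\R_n|)^{\ell-1}\approx(\ell-1)Y/|\R_n|$ also breaks down because $Y$ is of order $|\R_n|$ rather than $o(|\R_n|/n)$. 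So neither of your two factors is delivered, even though the theorem must hold for this model. What is needed is control of the \emph{joint} coverage of the entire catalyst population, and the paper obtains it by conditioning not on the $Y_x$ but on the number $N$ of uncatalyzed reactions: by exchangeability, $\PP(r_1,\ldots,r_s \mbox{ all uncatalyzed}\mid N)\leq (N/|\R_n|)^s$, so the Mossel--Steel covering argument runs conditionally with $W=N/|\R_n|$ in place of $q_*$, and Markov's inequality $\PP(W\geq\sqrt{q_*})\leq\sqrt{q_*}\leq e^{-\lambda/2}$ supplies the second factor. In particular, both the halving of $\lambda$ and the product structure of (\ref{beq2}) come from this square-root trick, not from splitting the budget $f$ in two; your explanation of the constant $2$ is reverse-engineered from the answer rather than produced by your mechanism.

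For part (ii) your starting point is incorrect: a RAF need not contain a catalytic cycle, because the RA condition permits a reaction to be catalyzed by a molecule type in the food set $F$, so a single food-generated, food-catalyzed reaction is already a RAF and your ``cycle'' characterization misses it. Beyond that, the convergence of your series over core sizes is asserted rather than proved, and counting $F$-generated cores of size $s$ is delicate. The paper's argument is far simpler and sidesteps all of this: any RAF, being $F$-generated, must contain a reaction all of whose reactants lie in $F$, and by the RA condition that reaction is catalyzed by \emph{some} molecule type; there are at most $M=2|F|^2$ such reactions, a constant independent of $n$, and each is catalyzed with probability $1-q_*\leq\lambda+o(1)$, so Boole's inequality gives $P_n\leq M(\lambda+o(1))$ directly. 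You would do well to adopt the paper's conditioning-on-$N$ device for part (i) and this constant-size witness for part (ii).
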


\begin{proof} We first show that a key identity  (derived for the uniform model in  \cite{Mossel:05}) holds for  all distribution-based polymer models. Let $q_*$ be the probability that any given reaction $r \in \R_n$ is {\em not} catalyzed by any molecule type in $X_n$. By the assumptions of the model, this probability is the same for each choice of $r$ and is given by:
\begin{equation}
\label{qeq}
q_*= \prod_{x \in X_n} \left(1-\frac{f}{|\R_n|}\right) \sim \exp(-f/n),
\end{equation}
where the asymptotic equivalence is from Eqn.~(\ref{asyeq}). The proof of Eqn. (\ref{qeq}) exploits the independence of the events $\cE_{x,r}:=$ ``$x \mbox { catalyzes } r$'' when $r$ is fixed and $x$ varies to give:
$$q_* = \PP(\bigwedge_{x \in X_n} \overline{\cE_{x,r}}) = \prod_{x\in X_n} \PP(\overline{\cE_{x,r}})= \prod_{x\in X_n} (1- \PP(\cE_{x,r})),$$
and Eqn.(\ref{qeq}) now follows since $p = \PP(\cE_{x,r})$ and $f= |\R_n| p$.

To establish the lower bound inequality on $\tilde{P_n}$ (Part (i)), we first observe that for any non-negative random variable $W$ with finite expected value $\mu = \EE[W]$, the following inequality holds:
\begin{equation}
\label{eh}
\PP(W <  \sqrt{\mu}) \geq 1- \sqrt{\mu}.
\end{equation}
This inequality follows from the standard inequality $\mu \geq \PP(W \geq w)w$ applied to $w=\sqrt{\mu}$ to deduce that $\PP(W \geq \sqrt{\mu}) \leq \sqrt{\mu}$, from which the result follows immediately. Let $N$ be the number of reactions in $\R_n$ that are not catalyzed by any molecule type in $X_n$, and let $W = N/|\R_n|$. We have $\EE[W] = q_*$. Let $\tilde{P_n}(W)$ be the probability that the polymer model has a RAF that generates every molecule type in $X_n$ conditional on the random variable $W$ (= $N/|\R_n|$). Given certain  reactions $r_1, r_2,\ldots, r_s$, let $\overline{\cE_i}$ be the event that reaction $r_i$ is not catalyzed by any molecule type in $X_n$. Then, by exchangeability, we have the following expression for the joint conditional probability:
$$\PP( \overline{\cE_1}  \wedge \overline{\cE_2} \wedge \cdots \wedge \overline{\cE_s}|N ) = \frac{N}{|\R_n|} \times \frac{N}{|\R_n|} \times \cdots \times \frac{N-(s-1)}{|\R_n|} \leq \left(\frac{N}{|\R_n|}\right)^s = W^s.$$
We can now apply the argument in the proof of Theorem 4.1(ii) of \cite{Mossel:05} to deduce that:
\begin{equation}
\label{helpsus}
\tilde{P_n}(W) \geq 1- \frac{\kappa(\kappa W)^t}{1-\kappa W}.
\end{equation}
By the total law of expectation, for any $w$, we have:
$$\tilde{P_n} = \EE[\tilde{P_n}(W)]  = \EE[\tilde{P_n}(W)| W < w] \PP(W  <  w]  + \EE[\tilde{P_n}(W)| W \geq w] \PP(W \geq w].$$
Since the second term on the right is non-negative, we have:
$$\tilde{P_n} \geq \EE[\tilde{P_n}(W)| W < w] \PP(W  <  w].$$
If we now set $w= \sqrt{\EE[W]} = \sqrt{q_*}$, then Inequality (\ref{eh}) and Eqn. (\ref{helpsus}) gives:
\begin{equation}
\label{n1}
\tilde{P_n} \geq  1- \frac{\kappa(\kappa \sqrt{q_*})^t}{1-\kappa \sqrt{q_*}} (1-\sqrt{q_*}).
\end{equation}
Finally, from Eqn. (\ref{qeq}) and $f \geq \lambda n$, we obtain:
\begin{equation}
\label{n2}q_* \leq e^{-\lambda},
\end{equation} 
Combining Inequalities (\ref{n1}) and (\ref{n2}) and noting that the expression in (\ref{n1}) is decreasing with $q_*$, we obtain the inequality required for Part (i).

\bigskip

Turning to Part (ii), we now establish Inequality~ (\ref{beq}). 
From  Eqn. (\ref{qeq}), we have:
$$q_{*} =  \left(1- \frac{f}{|\R_n|}\right)^{|X_n|}  \geq 1- \frac{ f |X_n|}{|\R_n|}.$$
Therefore,  if $f \leq \lambda n$, 
\begin{equation}
\label{helps1} 
1-q_{*} \leq \lambda +o(1),
\end{equation}
by Eqn.~(\ref{asyeq}).
The number $M$ of reactions that take two food molecules and produce another molecule is, at most, $2|F|^2= 2 x_t^2$, where $x_t = \kappa + \kappa^2+ \cdots +\kappa^t$. For there to be a RAF, at least one of these $M$ forward reactions must be catalyzed. By Boole's inequality, the probability of this is bounded above by $M(1-q_*)$, and so, from Eqn. (\ref{helps1}), $P_n$ is less than or equal to  $M(\lambda+o(1))$, which converges to zero as $\lambda  \rightarrow 0$. 
\end{proof}

Fig.~\ref{fig:compare} shows the behavior of $P_n$ as a function of $f$ for the binary polymer model, across our four exemplar distributions (for $n=10$ and $n=16$). The plots were obtained by simulations and the use of the RAF algorithm (except for the all-or-nothing model, where the exact formula was used). 
\begin{figure}[htb]
\centering
\includegraphics[width=9cm]{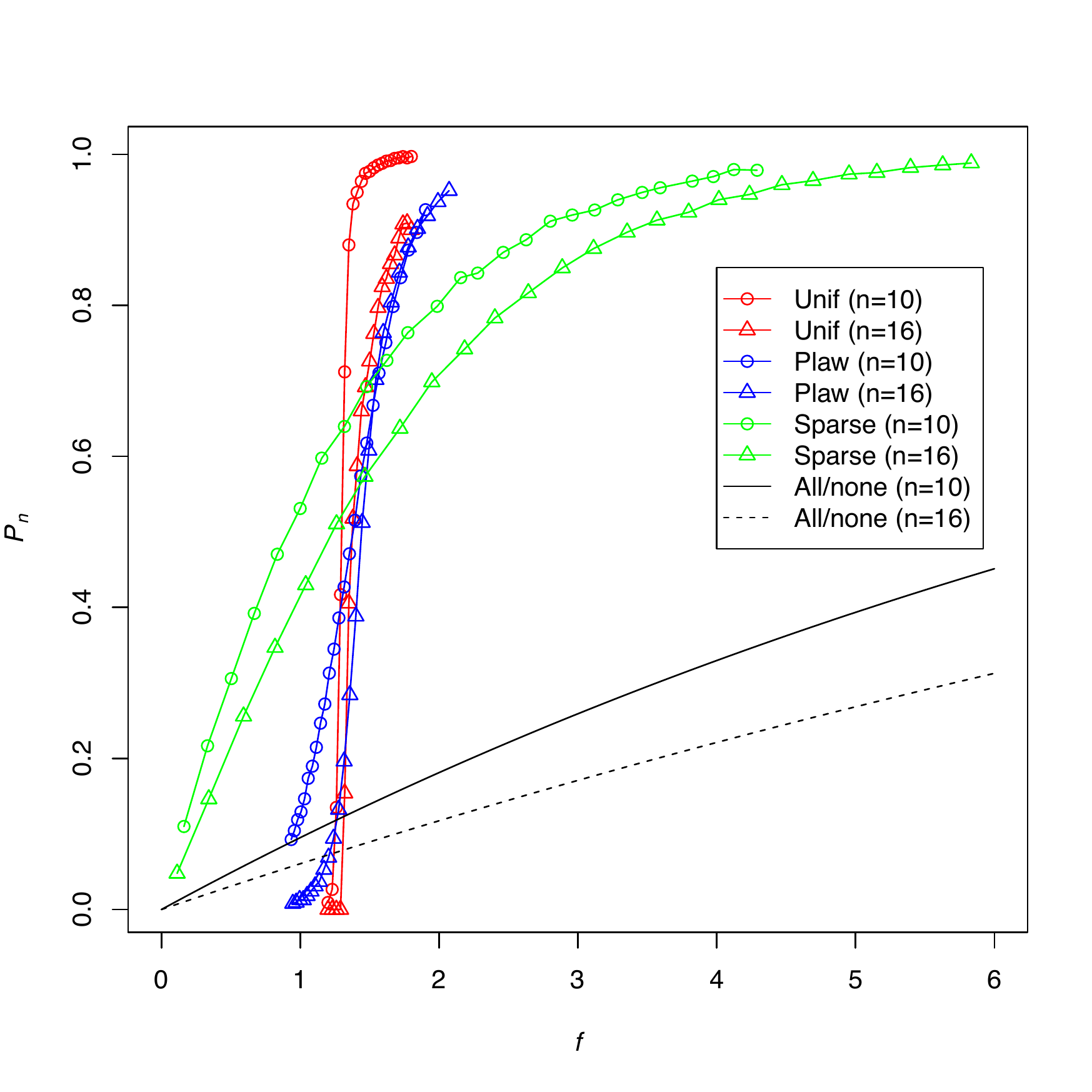}
\caption{Comparison across the four models of $P_n$ (probability of a RAF) as a function of $f$ (average catalysis rate per molecule) on the binary polymer model for $n=10$ and $n=16$.}
\label{fig:compare}
\end{figure}
Two features are apparent. Firstly, the uniform model and the power law model show a much sharper transition from $P_n=0$ to $P_n=1$ than the other two models. Secondly, the effect of increasing $n$ (from 10 to 16) has a larger effect on the curves for the sparse and all-or-nothing model than it does for the uniform and power law models.
Theorem~\ref{mainthm} also leads to the following interesting consequence for the structure of small RAFs in the sparse model, which is  different to that for the uniform model.

\begin{corollary}
Consider the sparse polymer model with $f= \lambda n$, where $\lambda$ is sufficiently large such that (from Theorem~\ref{mainthm})  $\tilde{P_n} \geq 1-\epsilon$. Then with probability at least $1 -\epsilon - o(1)$,  any instance of the model will contain a RAF with less than $2\lambda n^2$ reactions (where $o(1)$ refers to a term that converges to zero exponentially with  increasing $n$).
\end{corollary}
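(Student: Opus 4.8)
The plan is to exploit the RAF whose existence (with probability at least $1-\epsilon$) is furnished by Theorem~\ref{mainthm}(i): call it $\R'$, and recall that it generates \emph{every} molecule type in $X_n$ from $F$ using only ligation reactions. Although $\R'$ itself has order $|X_n|$ reactions, the decisive feature of the sparse model is that very few molecule types can ever act as catalysts, and a RAF only needs to produce its catalysts (together with the reactants feeding their construction), not every molecule type. So the first step is to control the number of potential catalysts. The molecule types that catalyze at least one reaction are a subset of the type-(i) molecules, whose number is ${\rm Bin}(|X_n|,p)$ with mean $p|X_n| \sim f = \lambda n$. A Chernoff bound then gives that, with probability $1-o(1)$ (decaying exponentially in $n$, since the mean grows like $\lambda n$), there are fewer than $2\lambda n$ non-food catalytic molecule types, which I will list as $c_1,\dots,c_m$ with $m < 2\lambda n$.

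Working on the intersection of the event that $\R'$ exists and the event $m<2\lambda n$ (probability at least $1-\epsilon-o(1)$), I would construct the small RAF directly. For each $c_i$, since $\R'$ generates $c_i$ from $F$ by ligations, I extract a binary \emph{build-up tree} $T_i\subseteq\R'$ by choosing a ligation in $\R'$ producing $c_i$ and recursing on its two (necessarily $\R'$-generated) reactants down to food leaves. Because $c_i$ has length at most $n$, the leaf lengths sum to $|c_i|\le n$, so the tree has at most $n-1$ internal ligation nodes, giving $|T_i|<n$. Setting $\R'':=\bigcup_{i=1}^m T_i$ then yields $|\R''|\le\sum_{i=1}^m|T_i|<mn<2\lambda n^2$, the required size bound.

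The remaining task is to verify that $\R''$ is a genuine RAF, and here I would lean on the closure (fixed-point) nature of the definition rather than any temporal ordering of catalysis. The food-generated property is immediate: each $T_i$ is a build-up from $F$, so every reactant of a reaction in $\R''$ is either a food molecule or a product of another reaction in the same tree. For the reflexively-autocatalytic property, take any $r\in\R''$; then $r\in T_i\subseteq\R'$, and since $\R'$ is a RAF, $r$ is catalyzed by some molecule $y$ that is either in $F$ or a product of $\R'$. The key observation is that $y$, being a catalyst, is itself a catalytic molecule, so either $y\in F$ or $y=c_j$ for some $j$; in the latter case $y$ is produced by $T_j\subseteq\R''$. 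Hence every reaction of $\R''$ is catalyzed by a molecule in $F$ or produced by $\R''$, establishing the RA condition.

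I expect the main obstacle to be exactly this self-consistency of the RA condition: one must guarantee that every catalyst invoked anywhere in the construction is itself among the molecule types that $\R''$ produces. This is precisely what forces the construction to include build-up trees for \emph{all} catalytic molecules (not merely those that happen to appear as reactants), and it is what pins the size bound to the number of catalysts, $O(\lambda n)$, rather than to $|X_n|$ as in the full $\R'$. Once the closure argument is arranged this way, the two quantitative inputs—the $O(n)$ bound on each build-up tree and the ${\rm Bin}(|X_n|,p)$ concentration supplying the exponentially small $o(1)$ term—are routine, and combining the two failure probabilities $\epsilon$ and $o(1)$ gives the stated $1-\epsilon-o(1)$.
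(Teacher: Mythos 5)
Your proposal is correct and follows essentially the same route as the paper's proof: a Chernoff bound showing at most $2\lambda n$ catalysts with exponentially small failure probability, combined with the all-of-$X_n$-generating ligation RAF from Theorem~\ref{mainthm}(i), restricted to the build-ups of just the catalysts, each contributing fewer than $n$ ligations. Your version merely spells out two points the paper leaves implicit (the binary-tree count giving the $<n$ bound per catalyst, and the closure check that every invoked catalyst is itself produced by the restricted set), so there is nothing substantive to add.
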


\begin{proof}  In the sparse polymer model, the number of catalysts (i.e. molecule types that catalyze one or more reactions) has mean $\lambda n$, and so (by the Chernoff bound)  the probability that there are more than $2\lambda n$ catalysts differs from 1 by a term that converges to zero exponentially with increasing $n$.  If $\tilde{P_n} \geq 1-\epsilon$ then with probability at least $1-\epsilon$, there is a RAF that constructs all molecule types in $X_n$ from $F$ using just the products of ligation reactions. Combining these two observations it follows that with probability at least $1-\epsilon - o(1)$, there is a RAF whose set $S$ of catalysts has size at most $2\lambda n$ and which constructs each molecule type $x \in X_n$ from $F$ via some sequence 
$s(x) = r_1(x), r_2(x), \ldots, r_m(x)$ of ligation reactions. Moreover, we must have $m < n$ (since each ligation reaction increases the size of a molecule by at least one),  so 
the set $(r_i(x): x\in S)$ is a RAF of size less than  $2\lambda n \times n = 2\lambda n^2$.
\end{proof}

\subsection{The impact of inhibition}
\label{impact}

Consider now the uniform model in which, in addition to  catalysis, each molecule type independently inhibits each reaction with a constant  probability. Let $h$ be the average number of reactions each molecule type inhibits, and, as before, let  $f$ be the average number of reactions each molecule type catalyzes ($f=p|\R_n|)$. Recall that a $u$-RAF is a RAF $\R'$ for which no reaction in $\R'$ is inhibited
by any molecule type in the set $S(\R')$ consisting of the products of reactions in $\R'$ together with the molecule types in the food set $F$. 

Clearly if $h>0$ then the probability of a $u$-RAF is bounded above by a  constant $\delta(h)< 1$ (even as $f \rightarrow \infty$) since there is a positive probability that all of the reactions that just involve
reactants from the food set $F$ are inhibited. Moreover $\delta(h) \rightarrow 0+$ as $h$ grows, even if $f$ were to grow exponentially (or faster) with $h$.
Despite this `trumps--all' feature of inhibition, one can still provide an inequality that relates the expected number of $u$-RAFs in a uniform polymer model with rates $(2f, h)$ to the expected number of RAFs in the uniform model with catalysis rate $f$.  

Let us assume that the catalysis rate  $f$ grows linearly with $n$ (this is a minor restriction since this rate suffices for RAFs to arise) and allow $h$ to also grow linearly with $n$ but with a constant factor that decays exponentially with the corresponding constant factor for $f$. Under this restriction, Theorem~\ref{mainthm2} (below) shows that we expect (at least) as many $u$-RAFs as there would be RAFs if the catalysis rate had simply been halved. 

Given a uniform polymer model $\Q_n$ (over any finite alphabet of $\kappa$ states)  let $\mu(f,h)$ be the expected number of $u$-RAFs as a function of  $f$ and $h$ (the catalysis and inhibition rates, respectively) and let
$\mu(f)$ be the expected number of RAFs as a function of $f$.   

\begin{thm}
\label{mainthm2}
Suppose that $f \leq \lambda n \leq \frac{1}{2}|\R_n|$, and  $h \leq \beta f$ where $\beta \leq n$ satisfies
\begin{equation}
\label{constraint}
\beta \leq \frac{1}{\lambda}\ln\left[\left(1- \frac{\beta\lambda n}{|\R_n|}\right)  \left(1+ e^{-\lambda}\left(1-\frac{\lambda n}{|\R_n|}\right)\right)   \right] \simeq \frac{\ln(1+e^{-\lambda})}{\lambda}.
\end{equation}
Then
$\mu(2f, h) \geq \mu(f).$
\end{thm}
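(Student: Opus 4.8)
The plan is to prove $\mu(2f,h)\ge\mu(f)$ by comparing the two expected counts subset-by-subset and then reducing the comparison for a fixed subset to a single inequality about one reaction. Since the model here is uniform, all events $(x,r)\in C$ and $(x,r)\in I$ are mutually independent, and the food-generation property of a subset $\R'\subseteq\R_n$ is deterministic; so I would write both $\mu(f)$ and $\mu(2f,h)$ as sums over the same index set of food-generated subsets $\R'$. For such an $\R'$ let $S(\R')$ be the food set together with the products of $\R'$, put $s=|S(\R')|\le|X_n|$, and for each $r\in\R'$ let $c_r\le s$ be the number of molecule types in $S(\R')$ eligible to catalyze $r$. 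With $p=f/|\R_n|$ and $q=h/|\R_n|$, independence gives
\[\PP(\R'\text{ is a RAF at rate }f)=\prod_{r\in\R'}\left(1-(1-p)^{c_r}\right),\]
and the probability that $\R'$ is a $u$-RAF at catalysis rate $2f$ and inhibition rate $h$ factorizes as $\prod_{r\in\R'}\bigl(1-(1-2p)^{c_r}\bigr)$ times $(1-q)^{s|\R'|}$, the final factor being the chance that none of the $s$ molecules in $S(\R')$ inhibits any of the $|\R'|$ reactions. The hypothesis $\lambda n\le\frac{1}{2}|\R_n|$ guarantees $2p\le1$, so these expressions make sense. Subsets containing a reaction with $c_r=0$ contribute zero to both sums and may be discarded, so it suffices to prove the inequality for each remaining $\R'$.

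For a fixed $\R'$ I would distribute the inhibition factor as one copy of $(1-q)^s$ per reaction and divide, reducing the target to the per-reaction inequality $(1-q)^s\cdot\frac{1-(1-2p)^{c_r}}{1-(1-p)^{c_r}}\ge1$. The key quantitative input is the elementary bound $1-2p\le(1-p)^2$, which gives $(1-2p)^{c_r}\le(1-p)^{2c_r}$ and hence $1-(1-2p)^{c_r}\ge\bigl(1-(1-p)^{c_r}\bigr)\bigl(1+(1-p)^{c_r}\bigr)$; this cancels the denominator and leaves the clean requirement
\[(1-q)^s\left(1+(1-p)^{c_r}\right)\ge1.\]

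I would then minimize the left-hand side over $c_r\le s\le|X_n|$. For fixed $s$ the factor $1+(1-p)^{c_r}$ is smallest at $c_r=s$, and $(1-q)^s+\bigl((1-q)(1-p)\bigr)^s$ is decreasing in $s$ because both bases lie in $(0,1)$; so the binding case is $s=c_r=|X_n|$, attained by a RAF generating every molecule type (the configuration behind $\tilde{P_n}$). It remains to check $(1-q)^{|X_n|}\bigl(1+(1-p)^{|X_n|}\bigr)\ge1$. Substituting $h\le\beta f$, $f\le\lambda n$ and using $|\R_n|\sim n|X_n|$ from Eqn.~(\ref{asyeq}) gives $q|X_n|\to\beta\lambda$ and $p|X_n|\to\lambda$ in this extreme, so $(1-q)^{|X_n|}$ and $(1-p)^{|X_n|}$ are, to the appropriate order, $e^{-\beta\lambda}$ and $e^{-\lambda}$, and the binding inequality becomes $e^{\beta\lambda}\le1+e^{-\lambda}$ --- exactly the asymptotic form of hypothesis~(\ref{constraint}), whose non-asymptotic statement records the Bernoulli estimates used en route.

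The main obstacle is conceptual as much as computational. Doubling the catalysis rate improves each of the $|\R'|$ reactions only by the bounded factor $1+(1-p)^{c_r}<2$, whereas inhibition has $s|\R'|$ independent chances to spoil the set, so it is not clear a priori that the comparison can survive for large RAFs. The resolution, and the step requiring the most care, is that $s$ is capped at $|X_n|$, so the per-reaction survival probability $(1-q)^s$ never drops below a positive constant of order $e^{-\beta\lambda}$; the constraint on $\beta$ is precisely the condition ensuring this constant is large enough to be absorbed by the catalysis gain in the worst case $s=c_r=|X_n|$. Locating that worst case and extracting the precise constant from~(\ref{asyeq}) is where the real difficulty lies.
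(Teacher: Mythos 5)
Your proposal is correct and follows essentially the same route as the paper: the same sum over food-generated subsets, the same factorization by independence, the same key step $1-2p\le(1-p)^2$ yielding the gain factor $1+(1-p)^{s}$ against the inhibition penalty $(1-q)^{s}$, and the same worst-case reduction to $s=|X_n|$ where the constraint on $\beta$ is exactly what is needed. The only difference is cosmetic (your per-reaction $c_r$ collapses to $s$ in the uniform model, and you defer the non-asymptotic Bernoulli estimates that the paper writes out explicitly to obtain the exact form of the constraint).
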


\begin{proof}
For a subset $\R'$ of $\R$, let  $\cE_{\R'}$ be the  event that  $\R'$ is a $u$-RAF.
Then 
\begin{equation}
\label{mueq}
\mu(f,h) = \sum_{\R' \in \bf{R}_{\rm{FG}}} \PP(\cE_{\R'}),
\end{equation}
where $\bf{R}_{\rm{FG}}$ be the collection of subsets $\R'$ of $\R_n$ that are $F$-generated.
In the uniform model, we can write:
\begin{equation}
\label{fir}
\PP(\cE_{\R'}) = \theta_{(f,h,s)}^{|\R'|},
\end{equation}
where $$\theta_{(f,h,s)}  = \left[1 - \left(1-\frac{f}{|\R_n|}\right)^{s}\right] \left( 1- \frac{h}{|\R_n|}\right)^{s},$$
and where $s=|S(\R')|$ (the number of molecule types produced by reactions in $\R'$ or present in the food set). 
If we now let $\nu = \frac{f}{|\R_n|}$, then 
\begin{equation}
\label{x1} 
 \theta_{(f,0,s)}  = [1-(1-\nu)^s].
\end{equation}
Moreover,
\begin{equation}
\label{x2} 
\theta_{(2f,h,s)}  \geq  [1-(1-2\nu)^s][1-\beta \nu]^s \geq [1-(1-\nu)^{2s}][1-\beta\nu]^s,
\end{equation}
where the first inequality is because $h \leq \beta f$.  The second inequality in (\ref{x2})  follows from the following generic inequality: 
$$(1-2y)^k \leq (1-y)^{2k},$$
for all integers $k\geq 1$ and $0 \leq y \leq \frac{1}{2}$ (this can be readily verified by induction on $k$)
and noting that $0 \leq \nu = f/|R_n| \leq \frac{1}{2}$ by the assumption on $f$ in the statement of the theorem. 

Combining  Eqns. (\ref{x1}) and (\ref{x2}) and applying the identity $(1-y^2)/(1-y) = (1+y)$ we find that the
ratio of $\theta_{(2f,h,s)}$ to $\theta_{(f,0,s)}$ satisfies 
\begin{equation}
\label{firstfrac}
\frac{\theta_{(2f,h,s)}}{\theta_{(f,0,s)}} \geq  [1+(1-\nu)^s][1-\beta \nu]^s.
\end{equation}
Since $f \leq \lambda n$ (and so $\nu \leq \lambda n/ |\R_n|$) the term on the right of this last equation is at least:
$$\left[ 1+ \left( 1- \frac{\lambda n}{|\R_n|} \right)^s \right] \left[ 1- \frac{\beta \lambda n}{|\R_n|}\right]^s$$
and since $s \leq |X_n|$, Inequality (\ref{firstfrac}) implies further  that:
\begin{equation}
\label{may1}
\frac{\theta_{(2f,h,s)}}{\theta_{(f,0,s)}}\geq   \left[ 1+ \left( 1- \frac{\lambda n}{|\R_n|} \right)^{|X_n|} \right] \left[ 1- \frac{\beta \lambda n}{|\R_n|}\right]^{|X_n|}.
\end{equation}
We now apply two inequalities:  $|X_n| \geq |\R_n|/n$ (from Eqn.~(\ref{asyeq})) and the generic inequality
$(1-t/k)^k \geq e^{-t}(1 - t/k)$, for $t>0$ and integer $k\geq 1$  (taking $k= |\R_n|/n$) to obtain:
$$\frac{\theta_{(2f,h,s)}}{\theta_{(f,0,s)}} \geq  (1+e^{-\lambda}(1 -\lambda n/|\R_n|))(e^{-\beta \lambda}(1 - \beta\lambda n/|\R_n|)).$$
Notice that the quantity on the right of this last equation is greater or equal to 1 when  $\beta$ satisfies Eqn.~(\ref{constraint}).
Thus,  $\theta_{(2f,h,s)} \geq \theta_{(f,0,s)}$, and this holds for all possible values of $s = |S(\R')|$.  Applying
this inequality in  Eqn.~(\ref{fir}) and subsequently in Eqn.~(\ref{mueq}) leads to the inequality:
$$\mu(2f, h) \geq \mu(f,0).$$
Since $\mu(f,0) = \mu(f)$, this completes the proof of Theorem~\ref{mainthm2}.
\end{proof}

{\bf Remarks}  
\begin{itemize}
\item[(i)]
As an application of Theorem~\ref{mainthm2}, we saw in Fig.~\ref{fig:compare} that for the uniform binary polymer model with $n=10$, the probability of a RAF 
is very close to 1  when $f=2$.  If we now put $\lambda = f/n =0.2$ into Eqn. (\ref{constraint}), this  gives an upper bound value for $\beta$ of $\beta\approx 3$.  In other words, for the uniform binary polymer model with $n=10$ we should expect at least as many $u$-RAFs when the catalysis rate is $2f=4$ and inhibition rate is at most $3f = 6$ as there are RAFs when $f=2$.  Indeed we will see in Section~\ref{sec:Inhibition}  that such $u$-RAFs do exist.  

An interesting feature of this example is that the inhibition rate is of similar magnitude (indeed larger) than the catalysis rate, and we expect to find $u$-RAFs, even though inhibition is in some sense a stronger condition than catalysis (i.e. if a reaction is inhibited by one molecule type then no matter how many molecule types catalyze that reaction it cannot be in any $u$-RAF).  However, it is easily seen that this requires some fine balancing in the choice of $h$ and $f$. Indeed, if  we assume that $h/f$ is greater or equal to some fixed value $c>0$, then $u$-RAFs can only arise when $f$ lies within  a certain window: If $f$  is too small there will be no RAFs (and so no $u$-RAFs) and when $f$ becomes too large the growth in $h (>cf)$ will eventually ensure that all RAFs fail to be $u$-RAFs.

\item[(ii)]
The approximation in the second half of Eqn. (\ref{constraint}) becomes quickly accurate as $n$ grows since  the additional terms in the first half  (involving $\lambda n/|\R_n|$ and $\beta\lambda n/|\R_n|$, where $\beta \leq n$) converge to zero at an exponential rate.

Notice also that Theorem~\ref{mainthm2} deals with the expected number of RAFs and $u$-RAFs, while Theorem~\ref{mainthm} is concerned with the probability of at least one RAF.  Calculating exact
non-asymptotic results for the latter quantity  (for RAFs or $u$-RAFs) appears to be a considerably harder exercise than calculating expected values.
\end{itemize}

\section{Levels of catalysis within a RAF}
\label{levels1}

In previous work, and above, we looked at the levels of catalysis in a CRS needed to obtain a given probability of finding RAF sets in random instances of the binary polymer model \cite{Hordijk:04,Mossel:05,Hordijk:10,Hordijk:11,Hordijk:14a}. However, these analyses considered only the overall level of catalysis required in the {\it full} CRS, expressed as the average number $f$ of reactions catalyzed per molecule type.  One question that has remained unexplored so far, is whether this overall level of catalysis is also reflected within the actual RAF sets that occur in these model instances, or whether those RAF sets, somehow, ``exploit'' local variations in the overall catalysis level. Here we investigate this question in more detail.

Using the standard notation, given a CRS $\Q=(X, R, C)$ and a RAF $\R' \subseteq \R$, let  $f_{\R}$ be the average number of reactions catalyzed per molecule type in the full CRS (i.e. the quantity we have so far denoted by $f$), and let $f_{\R'}$ be the average number of reactions in $\R'$ that are catalyzed by molecule types involved in $\R'$ (i.e. by molecule types that are produced by  reactions in $\R'$ or present in the food set $F$). We then generated 10,000 random instances of the binary polymer model for $n=10$, $t=2$, and various values of $p$ (for the uniform method) or $a$ (for the power law method) in such a way that the probability $P_n$ of finding RAF sets ranged roughly from 0.1 to 0.9. In these model instances we then measured and compared the levels of catalysis $f_{\R}$ in the full CRSs and $f_{\R'}$ in the RAFs that were found by our RAF algorithm within these model instances.

\subsection{Uniform catalysis}

Fig. \ref{fig:Unif_f} (left) shows the different levels of catalysis required for increasing probabilities $P_n$ of finding RAF sets for the uniform catalysis version of the model. The black line (``CRS'') shows the average observed $f_{\R}$ values for those model instances that {\it do} have a RAF set. The red line (``RAF'') shows the corresponding average observed $f_{\R'}$ values (i.e., the average level of catalysis {\it within} those RAFs). Surprisingly, the level of catalysis within RAFs is much lower than that in the full CRS of which they are a part, and in all cases this difference is statistically highly significant. Finally, the blue line (``noRAF'') shows the average observed $f_{\R}$ values for those model instances that do {\it not} have a RAF set. Not surprisingly, this level is also lower than that of model instances {\it with} a RAF. If the level of catalysis is too low, RAF sets will simply not exist. However, what is striking is that the level of catalysis {\it within} a RAF is even lower than that in model instances that do not have a RAF set at all. Again, this difference is statistically significant.

\begin{figure}[htb]
\centering
\includegraphics[width=13cm]{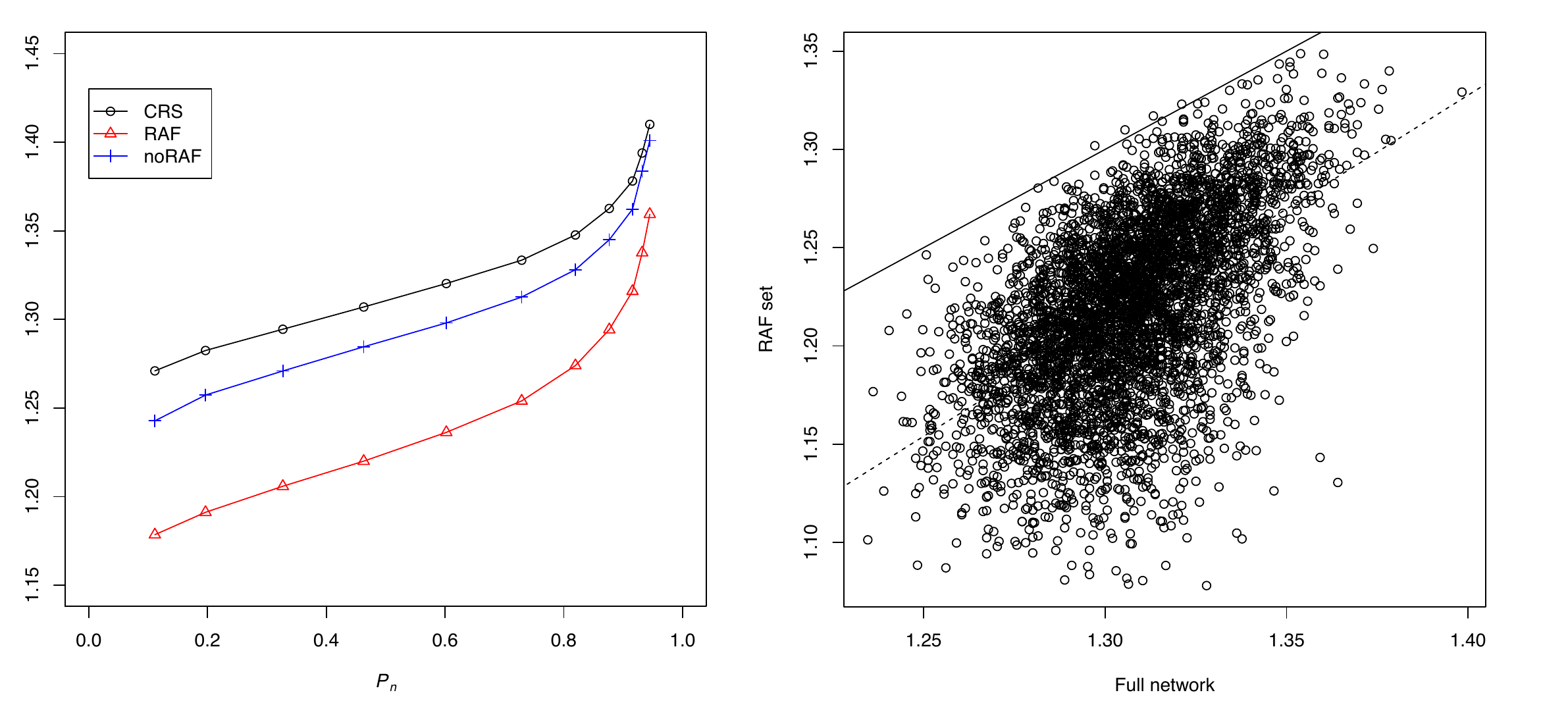}
\caption{{\em Left:}  The average $f_{\R}$ (``CRS'') and $f_{\R'}$ (``RAF'') for the uniform binary polymer model ($n=10$) over those model instances with a RAF set, and the average $f_{\R}$ (``noRAF'') over those instances without a RAF, plotted against the probability $P_n$ of finding RAF sets. {\em Right:} A scatter plot of $f_{\R'}$ vs. $f_{\R}$ for $P_n \approx 0.50$ for the uniform binary polymer model with $n=10$. The solid line represents $f_{\R'} = f_{\R}$, while the dashed line represents a linear fit.}
\label{fig:Unif_f}
\end{figure}

Initially, we had expected that if there was any difference between $f_{\R}$ and $f_{\R'}$, the catalysis level $f_{\R'}$ within the RAF set would be {\it higher} than the level $f_{\R}$ in the full CRS, assuming that the RAF set would exploit local variations in catalysis levels where they are higher than in other parts of the network. However, exactly the opposite appears to be the case. One possible explanation could be that if catalysts are assigned randomly to reactions, as in the binary polymer model, there will be many reactions that are catalyzed by more than one molecule type. To form a RAF set, in principle only one catalyst is necessary per reaction, so a RAF set might primarily include those molecule types and reactions that happen to have a more ``efficient'' (local) distribution of catalysis.

Fig. \ref{fig:Unif_f} (right) also shows a scatter plot of the $f_{\R'}$ values (``RAF set'') vs. the corresponding $f_{\R}$ values (``Full network'') for the $\sim$5,000 model instances that contained a RAF set  around $P_n=0.50$. The solid line indicates where $f_{\R'} = f_{\R}$. Apart from a handful of exceptions, the dots are all well below the line (i.e., $f_{\R} > f_{\R'}$ in almost all instances). The dashed line shows the linear fit from a standard regression analysis.

\subsection{Power law catalysis}

The same analysis was performed for the power law catalysis version of the binary polymer model. Fig. \ref{fig:Plaw_f} (left) shows the results for the average catalysis levels $f$. 
\begin{figure}[htb]
\centering
\includegraphics[width=13cm]{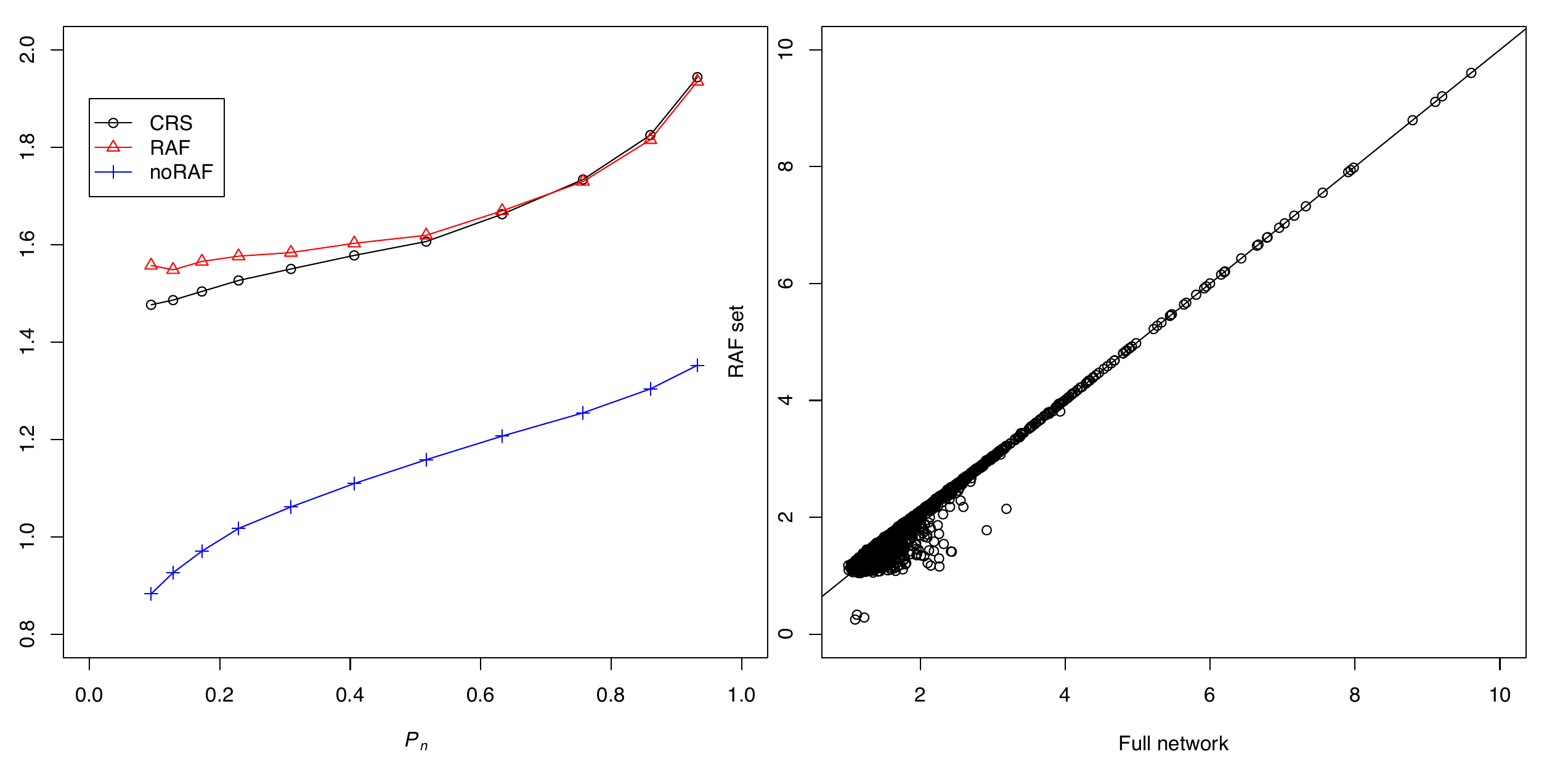}
\caption{{\em Left:} The average $f_{\R}$ (``CRS'') and $f_{\R'}$ (``RAF'') for the power-law binary polymer model ($n=10$) over those model instances with a RAF set, and the average $f_{\R}$ (``noRAF'') over those instances without a RAF, plotted against the probability $P_n$ of finding RAF sets. {\em Right:} A scatter plot of $f_{\R'}$ vs. $f_{\R}$ for $P_n \approx 0.50$ for the power-law binary polymer model with $n=10$. The solid line represents $f_{\R'} = f_{\R}$.}
\label{fig:Plaw_f}
\end{figure}
Here the results are partly in line with our initial expectation, in that RAF sets do indeed seem to ``exploit'' differences in the catalysis level where it is locally higher. With power law catalysis, there is always a chance that there are a few molecule types that catalyze many reactions (so-called ``hubs'', or highly connected nodes). Averaged over the entire network (with many molecule types that do not catalyze any reactions at all), this may not have a very large impact. However, a RAF set that is primarily constructed around these hubs will indeed have a higher level of catalysis than the full network of which they are part. This seems to happen up to a level of about $P_n=0.40$. After that, the differences between the full CRS and the RAF are not statistically significant anymore (at the 5\% level).

What is even more striking here is the difference in the average level of catalysis $f_{\R}$ between model instances that {\it do} (black line) and those that do {\it not} (blue line) contain a RAF set. Again, this is mostly due to assigning catalysis according to a power law, which results in a much larger variation in $f_{\R}$ values among model instances compared to the uniform catalysis method.

Also in Fig.~\ref{fig:Plaw_f} (right), the larger variance can be clearly seen in  the scatter plot of the $f_{\R'}$ values (``RAF set'') vs. the corresponding $f_{\R}$ values (``Full network'') for the $\sim$5,000 model instances that contained a RAF set around $P_n=0.50$. Note the much larger range of $f$ values compared to the uniform catalysis case above. The solid line again indicates where $f_{\R'} = f_{\R}$. Here, the majority of the dots are close to or slightly above this line, with only a large portion of dots below the line (i.e., $f_{\R} > f_{\R'}$) for smaller values of $f$.

\section{Growth rates in level of catalysis}
\label{levels2}

Two other issues that were only partially addressed in previous work are related to the growth rate in the level of catalysis $f$ needed for RAFs to arise with increasing maximum molecule length $n$ in the binary polymer model. First, although the mathematical results show that linear growth with $n$ is both necessary and sufficient, the constants involved in the 
upper and lower bounds are quite far apart.  Second, in the power law catalysis version of the model it seemed that there is no increase in the level of catalysis required at all (with increasing $n$), raising the question of whether this phenomenon persists for large values of $n$.  Both these issues are  resolved here.

\subsection{Theoretical vs. experimental rates}

There is a clear gap between the actual growth rate in level of catalysis needed for autocatalytic sets to form, as experimentally observed \cite{Hordijk:04} and the lower bounds predicted by theory (e.g. Eqns. (\ref{basiceq}) and (\ref{beq2})).    For the uniform catalysis version of the binary polymer model, both the theoretical and experimental rates were shown to follow a linear relationship in $n$ (the maximum molecule length) but the actual slopes differ. This is due to several simplifying assumptions in the theoretical proof.  We showed previously that at least one of those assumptions can partially explain this difference \cite{Hordijk:11}. Here, we show that a second simplifying assumption in the proof can explain most of the remaining difference.

The slope of the linear relationship as calculated by the lower bound on $P_n$ given by Theorem 4.1 (ii) in \cite{Mossel:05},  is given by $\ln\left(\frac{-1+\sqrt{17}}{16}\right) \approx 1.63$ (setting $P_n=0.50$ and $\kappa = t = 2$). However, the slope derived from the original experiments in \cite{Hordijk:04} is just under 0.02, a difference of two orders of magnitude. 
Notice, however, that the  bound in Theorem 4.1(i) in \cite{Mossel:05} is for  a RAF that contains all the molecule types in $X$, a much stronger requirement than is necessary in practice. Indeed, in \cite{Hordijk:11}, we showed that this assumption can explain a large part of this difference by performing experiments that require all molecule types in $X$ to be in the RAF. This provided a slope of 0.70 \cite{Hordijk:11}.

Also, the  bound in Theorem 4.1(i) requires only  {\it forward} (or ligation) reactions to be  considered (indicated by $\R_+$ in \cite{Mossel:05}). Recall that in the binary polymer model each ligation reaction has a corresponding cleavage (or {\it backward}) reaction, and if a molecule type is assigned as a catalyst to a forward reaction, it is also automatically assigned as a catalyst to the corresponding backward reaction. In other words, a forward reaction and the corresponding backward reaction always come in pairs, and are catalyzed by the same molecule type(s).  Backward (cleavage) reactions are useful though, as they can potentially provide shorter molecule types (bit strings) that can then be used to create necessary longer molecule types through one ligation reaction, whereas otherwise these longer molecule types may have to be built up through several consecutive ligation reactions, starting all the way from the food set.  

Indeed, it turns out that when only forward reactions are allowed in the binary polymer model, a higher level of catalysis $f$ is required to get the same probability of finding RAF sets compared to the model with both forward and backward reactions. Redoing the experiments using only forward reactions {\it and} requiring all molecule types in $X$ to be in the RAF, results in a slope of 1.51.

\begin{table}[htb]
\centering
\begin{tabular}{|l|l|}
\hline
Case         & Slope \\
\hline
standard     & 0.02 \\
all-$X$      & 0.70 \\
F \& all-$X$ & 1.51 \\
theory       & 1.63 \\
\hline
\end{tabular}
\caption{The slopes of the linear relationship for the growth rate in required level of catalysis, with increasing maximum molecule length $n$, as derived from the four different cases described in the text.}
\label{tab:slopes}
\end{table}

Table \ref{tab:slopes} summarizes these results, comparing the slopes for the four cases described: (i) the original experiments with the standard (uniform catalysis) binary polymer model (``standard''), (ii) the experiments requiring all molecule types in $X$ to be in the RAF (``all-$X$''), (iii) the experiments using only forward reactions {\it and} requiring all molecule types in $X$ to be in the RAF (``F \& all-$X$''), and (iv) the theoretical proof (``theory'').  Clearly, the additional assumption of only forward reactions can explain most of the remaining difference between the theoretical and the experimental slopes in the linear relationship for the growth in level of catalysis with increasing $n$.

\subsection{Power law catalysis}
In contrast to the uniform catalysis version of the binary polymer model, the power law catalysis version seemed to require {\it no} increase in the level of catalysis, with increasing $n$, to get a probability $P_n \approx 0.50$ of finding RAF sets.
\begin{figure}[htb]
\centering
\includegraphics[width=8cm]{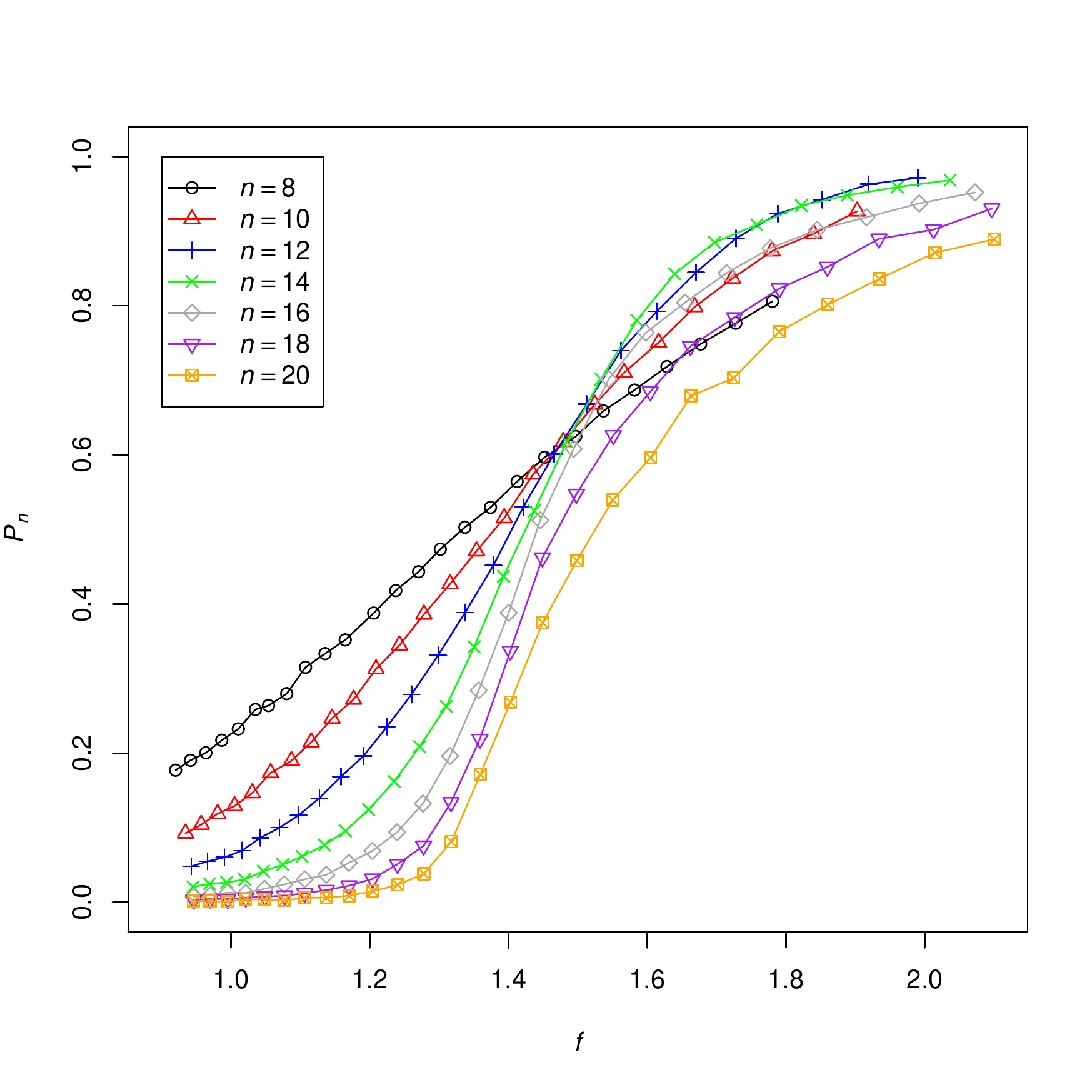}
\caption{The probability $P_n$ of finding RAF sets against the average level of catalysis $f$ for various values of the maximum molecule length $n$ in the binary polymer model with power law distributed catalysis.}
\label{fig:Plaw_P_f}
\end{figure}
 In Fig. 9 in \cite{Hordijk:14a}, $P_n$ vs. $f$ is plotted for various values of $n$ (up to $n=12$) for the power law case. However, rather than slowly moving to the right (i.e. towards higher $f$ values, as is the case for the uniform model \cite{Hordijk:04}),  the curves for increasing $n$ for the power law case cross each other around $f=1.5$ \cite{Hordijk:14a}. This result seems somewhat counter-intuitive, especially as Theorem \ref{mainthm}(ii) above implies that the curves must eventually start moving to the right as $n$ grows.
It turns out that this phenomenon is indeed a result of the relatively small values of $n$ (maximum molecule length) that were used in the initial experiments. We have redone the experiments with the power law catalysis version of the model, and also including larger values of $n$ (up to $n=20$). In this case, the curves stop crossing  and start to move slowly to the right for $n \geq 16$, as shown in Fig. \ref{fig:Plaw_P_f}.

\section{Inhibition} \label{sec:Inhibition}

In this final section, we return to the issue of inhibition, discussed earlier.  In biological systems, inhibitors play an important role in regulation (genetic, metabolic, or otherwise). As described in Section \ref{sec:BinPolModel} above, the notion of inhibition can be included in the RAF framework as well. However, in \cite{Mossel:05} it was shown that the problem of finding a $u$-RAF (or even determining whether or not one exists in a CRS allowing inhibition) is NP-hard. So, unlike the case with ``regular'' RAFs, we cannot expect to have an efficient algorithm for detecting $u$-RAFs in general CRSs that allow inhibition.

In \cite{Hordijk:15b}, we introduced a fixed-parameter tractable algorithm for finding $u$-RAFs (including a maximal $u$-RAF), with  the total number $m$ of inhibitors in the system as a parameter. If $m$ is small enough, the problem can still be solved efficiently, as we showed with an application to the binary polymer model and using $m=10$ \cite{Hordijk:15b}.

Here, we introduce an efficient iteration algorithm for finding $u$-RAFs, independent of the value of $m$. This algorithm is polynomial in the size $|\R|$ of the CRS, and provides a ``sufficient condition'' for a $u$-RAF to exist. In other words, if the iteration algorithm returns a non-empty result, the CRS contains a $u$-RAF, and therefore also a maximal $u$-RAF (although this could be larger than the $u$-RAF found by the algorithm). However, if the iteration algorithm returns an empty set, the CRS may still contain one or more $u$-RAFs (i.e., more work needs to be done to find out).

The iteration algorithm works as follows. Given a ``regular'' CRS plus an inhibition set $I$ (i.e., a set of molecule type--reaction pairs $(x,r)$ such that molecule type $x$ inhibits reaction $r$), perform the following steps:
\begin{enumerate}
  \item Let $\R_1$ be the ``regular'' RAF for the given CRS (i.e., not taking inhibition into account and using the standard RAF algorithm).
  \item Let $\R_2$ be the subset of $\R_1$ consisting of all reactions in $\R_1$ that are {\it not} inhibited by any of the molecule types produced by reactions in $\R_1$ or present in the food set $F$.
  \item Let $\R_3$ be the ``regular'' RAF contained within $\R_2$ (if any).
\end{enumerate}
Clearly, if $\R_3$ is non-empty, then it is a $u$-RAF for the given CRS (although it may not be a maximal $u$-RAF).

As a test of this algorithm, we searched for $u$-RAFs in the binary polymer model ($n=10$) for the setting  discussed in Remark (i) (following the proof of Theorem~\ref{mainthm2}).
\begin{table}[htb]
\centering
\begin{tabular}{|rr|}
\hline
$(1)$ & $(2)$ \\
\hline
6393 &  2547 \\
6415 &  2586 \\
6447 &  2729 \\
6410 &  2716 \\
6351 &  2581 \\
6359 &  2667 \\
6395 &  2718 \\
6397 &  2512 \\
6418 &  2631 \\
6438 &  2625 \\
\hline
\end{tabular}
\caption{The sizes of (1) the maximal RAF not taking inhibitors into account, and  (2) the $u$-RAF from the iterative algorithm, across 10 random instances of the uniform binary polymer model containing a RAF (with $n=10$, catalysis rate $f=4$ and inhibition rate $h=6$).}
\label{tab:m_all}
\end{table}
This involves a catalysis rate of $4$ and an inhibition rate of $6$. Theorem~\ref{mainthm2} coupled with Fig.~\ref{fig:compare} suggested that $u$-RAFs would likely exist, and 
applying the iteration algorithm over 10 random instances succeeded in detecting $u$-RAFs in all cases, as shown in Table~\ref{tab:m_all}.  The  fixed-pararameter
tractable $u$-RAF algorithm would be quite infeasible here, as $m=|X_{10}| = 2046$.

\subsection{Efficiency of iteration algorithm}

Since the ``standard'' RAF algorithm has a running time that is polynomial in the size $|\R|$ of the given CRS, the iteration algorithm above for finding $u$-RAFs is clearly also polynomial-time. We now compare the results of the iteration algorithm with that of the earlier exact algorithm on model instances in which the number $m$ of inhibitor molecule types is bounded (this  exact algorithm, by contrast, has a complexity that
grows exponentially with $m$). We then apply the iteration algorithm to model instances that are beyond the feasible reach of the exact method.

Note that in this section we continue to work in the uniform model, but now there are $m$ molecule types that are inhibitors, and for each of these the probability the molecule type inhibits any particular reaction is $q$. This is therefore quite different to the set-up in Section~\ref{impact} (and discussed above) where each molecule type has a constant probability of inhibiting each reaction. 

Using the same parameter values as in the previous study (i.e., $n=10$, $t=2$, $p=0.0000792$ [giving a probability $P_n$ of 0.50 to get ``regular'' RAFs]), $m=10$, and inhibition probabilities $q=10 \times p$ and $q=100 \times p$ \cite{Hordijk:15b}, the results are shown in Table \ref{tab:q10} (including the additional parameter value $q = 50 \times p$).

For each of the three cases (different values for the inhibition probability $q$), 10 instances of the model that contain a ``regular'' RAF were taken and both the exact and the iteration algorithm for finding $u$-RAFs were applied. In each table, the column labeled (1) shows the size of the (regular) maximal RAFs for each of the 10 instances. Column (2) shows the size of the maximal $u$-RAF that was found by the exact algorithm. Column (3) shows the size of the $u$-RAF found by the iteration algorithm.

\begin{table}[htb]
\centering
\begin{tabular}{|rrr|rrr|rrr|}
\hline
\multicolumn{3}{|c|}{$q = 10 \times p$} & \multicolumn{3}{|c|}{$q = 50 \times p$} & \multicolumn{3}{|c|}{$q = 100 \times p$} \\
(1) & (2) & (3) & (1) & (2) & (3) & (1) & (2) & (3) \\
\hline
1544 & 1510 & 1506 & 1339 & 1182 & 0 &1424 & 1290 & 1 \\
1459 & 1451 & 1451 & 1395 &    0 & 0 & 1645 & 1608 & 0  \\
1238 & 1206 &    0 & 1196 & 1181 &  0  & 1526 & 1473 & 0 \\
1732 & 1728 & 1728 & 1507 & 1469 & 1198 & 1447 & 0 & 0  \\
1447 & 1423 & 1423 & 1481 & 1461 & 1347 & 1441 & 1277 & 1 \\
1466 & 1457 & 1457 & 1480 & 1234 & 1150 & 1606 & 1583 & 0  \\
1478 & 1443 & 1306 & 1501 & 1469 &    0 &  1419 & 1397 & 0 \\
1516 & 1508 & 1508 & 1596 & 1509 & 1442 & 1457 & 1348 & 0 \\
1302 & 1293 & 1272 & 1319 & 1290 &  0 & 1147 & 1117 & 0 \\
1490 & 1444 & 1392 & 1642 & 1585 & 1350 & 1184 &   0 & 0  \\
\hline
\end{tabular}
\caption{For $m=10$, the sizes of (1) the maximal RAF not taking inhibitors into account, (2) the maximal $u$-RAF, and (3) the $u$-RAF from the iteration algorithm, for $q=10 \times p$ (left) and $q = 50 \times p$  (middle) and $q=100\times p$ (right), across ten random instances containing a RAF.}
\label{tab:q10}
\end{table}

For the case where $q=10 \times p$ (Table \ref{tab:q10}, left),  the iteration algorithm clearly performs almost as well as the exact algorithm. In all but one case, the iteration algorithm also finds a $u$-RAF, and its size is close to or often even equal to that of the maximal $u$-RAF found by the exact algorithm.

However, for $q=50 \times p$ (Table \ref{tab:q10}, middle), the situation is already quite different, with the iteration algorithm only finding a $u$-RAF in about half of the cases, and with a size significantly smaller than that of the maximal $u$-RAF found by the exact algorithm. Note that in four of the five cases where the iteration algorithm did {\it not} find a $u$-RAF, there actually was one.

Finally, for $q=100 \times p$ (Table \ref{tab:q10}, right) the situation is much worse still. In most cases the iteration algorithm does not find a $u$-RAF at all (or otherwise only a ``trivial'' one of size 1), while in 8 out of the 10 cases there actually is one.

Clearly, the performance of the iteration algorithm depends on the ratio of the rates of catalysis to inhibition.  Recall that the catalysis rate $f$ is defined as the average number of reactions catalyzed per molecule type, which, in the binary polymer model, can be expressed as \cite{Hordijk:04}:
\[ f = \frac{p|X||\R|}{|X|} = p|\R|. \]
Similarly, the inhibition rate is  the average number of reactions inhibited per molecule type:
\[ h = \frac{qm|\R|}{|X|} = \frac{spm|\R|}{|X|}, \]
where $s$ is the factor by which the catalysis probability $p$ is multiplied to get the inhibition probability $q$ (i.e., $s=10,50,100$ in our experiments). Therefore, the ratio $f/h$ of the level of catalysis to the level of inhibition is:
\[ \frac{f}{h} = \frac{|X|}{sm}. \]

For $n=10$, the number of molecule types in the CRS is $|X|=2046$. For $s=10$ and $m=10$, this ratio $f/h$ is about one order of magnitude in size, and the iteration algorithm still works quite well. But for $s=100$ and $m=10$, this ratio is close to one (i.e., the two levels are similar), and the iteration algorithm may fail to detect $u$-RAFs when they exist, or find ones that are much smaller than the maximal ones.

However, the main advantage of the iteration algorithm is that it can be used for any value of $m$ (number of inhibitors). For example, for $m=100$, the exact algorithm is not feasible anymore, but the iteration algorithm can still be used as before (its running time does not depend on $m$). Using the same parameter values, but with $m=100$ and $q=p$ (giving a ratio $f/h$ about one order of magnitude in size), the results are as shown in Table \ref{tab:q100}.  In all cases, the iteration algorithm found a $u$-RAF, although it is not known whether its size is close to the maximal $u$-RAF.

\begin{table}[htb]
\centering
\begin{tabular}{|rrr|rrr|}
\hline
\multicolumn{3}{|c|}{$m=100$, $q = p$} \\
(1) & (2) & (3) \\
\hline
1417 & ? & 1402 \\
1278 & ? & 1161 \\
1225 & ? & 1211 \\
1445 & ? & 1330 \\
1347 & ? & 1335 \\
1451 & ? & 1389 \\
1344 & ? & 1285 \\
1531 & ? & 1492 \\
1508 & ? & 1451 \\
1312 & ? & 1175 \\
\hline
\end{tabular}
\caption{The sizes of (1) the maximal RAF without taking inhibitors into account, (2) the maximal $u$-RAF, and (3) the $u$-RAF from the iteration algorithm, for  $m=100$, $q = p$ (right), for 10 random instances containing a RAF.}
\label{tab:q100}
\end{table}

In conclusion, the performance of the iteration algorithm depends largely on the ratio $f/h$ of the level of catalysis relative to the level of inhibition. If the ratio is large enough (one order of magnitude or larger), the iteration algorithm performs quite well. If this ratio is too small (close to equality), the iteration algorithm may not detect $u$-RAFs when they exist.
However, the advantage of the iteration algorithm is that it can be used for any value of $m$ (i.e., for any number of inhibitors) to determine whether there exist uninhibited RAFs in CRSs that also allow inhibition.

\section{Future work}

Our results suggest a number of plausible conjectures and questions for further study.
Firstly, the sharpness of the transition from $P_n=0$ to $P_n=1$ for the uniform model (and perhaps also the power law model), evident in Fig.~\ref{fig:compare}, suggests that a 0-1 law (analogous to Eqn.(\ref{allR})) may hold. Namely,  for some constant $\gamma$ under the uniform model, we have:
\begin{equation}
\label{allR2}
\lim_{n \rightarrow \infty} \PP(\Q_n \mbox{  has a RAF}) =
 \begin{cases}
1, &  \mbox{ if } f \geq \lambda n \mbox{ for } \lambda > \gamma;\\
0, & \mbox{ if } f \leq \lambda n \mbox{ for } \lambda < \gamma.
\end{cases}
\end{equation}
Settling this conjecture seems a reasonable objective.  Clearly, such  a sharp transition does not hold for all distribution-based models, since it Eqn.(\ref{pneq}) shows that it fails for the all-or-nothing model and it might possibly also fail for the sparse model. Simulations from \cite{Hordijk:04} suggest that $\gamma$ might be close to 0.02.

The sizes of the smallest RAFs at catalysis levels where RAFs first arise is also an interesting question. Here, the uniform model behaves quite differently from the sparse and all-or-nothing models: small (sub-exponential size) RAFs in the uniform model are highly unlikely when they first emerge, while in the other two models they will always be present.  The situation for the power law model is less clear. Simulations and the application of the algorithm to find irreducible RAFs suggest that there are smaller RAFs in this model than in the uniform \cite{Hordijk:14a}, but a mathematical analysis of the existence (or not) of small RAFs in the power law setting could be of interest.

Finally, although we showed that the sparse model is likely to have RAFs of, at most, quadratic size in $n$ (when RAFs first appear), it is not clear if this can be improved to order $n$. In particular, is it likely that there are RAFs in which a single molecule type catalyzes some sequence of reactions that suffice to construct that molecule? A simple mathematical argument shows that this becomes increasingly unlikely as $n$ grows if we restrict ourselves to just ligation reactions (results not shown). However, the situation is less clear without this restriction.

\section*{Acknowledgments}

WH thanks the Konrad Lorenz Institute Klosterneuburg for financial support in the form of a fellowship.
We also thank the NLHPC, Santiago, Chile (www.nlhpc.cl) for allowing us to use their high performance computing infrastructure.

\bibliographystyle{abbrv}
\bibliography{CatLevel}

\section*{Appendix: Proof of Eqn.~(\ref{allR})}
It suffices to establish the following stronger result: if $\lambda = \ln \kappa + c>0$, and $f = \lambda n^2$, then for $\beta :=\frac{ \kappa}{\kappa-1}$, we have:
\begin{equation}
\label{allraf} 
\PP(\R_n \mbox{ is a RAF in } \Q_n) \sim \exp (-(n\beta - \beta^2)e^{-cn}),
\end{equation}
which converges to 0 or 1 depending on whether or not $c<0$ or $c>0$ respectively.
To establish Eqn.~(\ref{allraf}),  note that for $f = ( \ln \kappa + c) n^2$,  Eqn.~(\ref{qeq}) gives:
$q_{*} \sim \kappa^{-n}e^{-cn}.$
Now,  the entire reaction set $\R_n$ is a RAF precisely if every reaction from $\R_n$ is catalyzed by at least one molecule type, and so
$$\PP(\R_n \mbox{ is a RAF in } \Q_n) \sim \left[1- \frac{1}{\kappa^n} e^{-cn}\right]^{|\R_n|}  \sim \exp\left(-\frac{|\R_n|}{\kappa^n} e^{-cn}\right).$$
Eqn.~(\ref{allraf}) follows by noting that $|\R_n|/\kappa^n \sim n\beta -\beta^2$ from  Eqn.~(3) of \cite{Mossel:05}.
\hfill$\Box$

\end{document}